\newtheorem{lemma}{Lemma}
\newtheorem{theorem}{Theorem}
\begin{document}

\title{A Coding Theoretic Approach for Evaluating Accumulate Distribution on Minimum Cut Capacity 
of Weighted Random Graphs}

\author{\authorblockN{Tadashi Wadayama and  Yuki Fujii}
\authorblockA{Department of Computer Science,  Nagoya Institute of Technology\\
Email: wadayama@nitech.ac.jp\\
}
} 

\maketitle
\begin{abstract}
The multicast capacity of a directed network is closely related to the $s$-$t$ maximum flow, which 
is equal to the $s$-$t$ minimum cut capacity due to the max-flow min-cut theorem.
If the topology of a network 
(or link capacities) is dynamically changing or have stochastic nature, it is not so trivial to predict statistical properties on 
the maximum flow.
%In order to obtain an insight for such random networks, 
%it is natural to investigate statistical properties, 
%such as expectation or probability distribution, of the minimum cut capacity of a random graph ensemble.  
In this paper, we present a coding theoretic approach 
for evaluating the accumulate distribution of the minimum cut capacity of weighted random graphs.
The main feature of our approach is to utilize the correspondence between the cut space of a graph and 
a binary LDGM (low-density generator-matrix) code with column weight 2.
The graph ensemble treated in the paper is a weighted version of Erd\H{o}s-R\'{e}nyi random graph ensemble.
The main contribution of our work is a combinatorial lower bound for the accumulate distribution of the minimum cut capacity.
From some computer experiments, it is observed that the lower bound derived here 
reflects the actual statistical behavior of the minimum cut capacity.
\end{abstract}

\section{Introduction}

Rapid growth of information flow over a network
such as a backbone network for mobile terminals  requires efficient utilization of full potential of the network. 
In a multicast communication scenario,  it is well known that 
an appropriate network coding achieves its multicast capacity.
Emergence of the network coding have broaden 
network design strategies for 
efficient use of wired and wireless networks \cite{Ahlswede}.
%The network coding offers a practical way to increase the throughput 
%of a network up to the multicast capacity.

The multicast capacity of a directed graph is closely related to the $s$-$t$ maximum flow, which 
is equal to the $s$-$t$ minimum cut capacity due to the max-flow min-cut theorem \cite{Schrijver}.
The topology of a network and the assignment of the link capacities determine
the $s$-$t$ minimum cut capacity of a network. 

If a network is fixed, the corresponding $s$-$t$ maximum flow of 
the network can be efficiently evaluated by using Ford-Fulkerson algorithm \cite{Schrijver}.  
However, if the topology of a network 
(or link capacities) is dynamically changing or have stochastic nature, it is not so trivial to predict statistical properties on 
the maximum flow. For example, in a case of wireless network, the link capacities may fluctuate 
because of the effect of time-varying fading. Another example is an ad-hoc network
whose link connections are stochastically determined.

In order to obtain an insight for the statistical property of the min-cut capacity for such random networks, 
it is natural to investigate the statistical properties of a random graph ensemble.  
Such a result may unveil  typical behaviors of minimum cut capacity (or maximum flow) for 
given parameters, such as the number of vertices and edges. 

Several theoretical works on the maximum flow of  random graphs ({\em i.e.}, graph ensembles) have been made.
In a context of randomized algorithms, Karger showed a sharp concentration result for maximum flow in 
the asymptotic regime \cite{Karger}.
Ramamoorthy et al.  presented another concentration result;
the network coding capacities of weighted random graphs and weighted random geometric graphs
concentrate around the expected number of nearest neighbors of the source and the sinks \cite{Aditya}.
These concentration results indicate an asymptotic property of the maximum flow of random networks.
Wang et al.  shows statistical property of the maximum flow in an asymptotic setting as well.
They discussed the random graph with  Bernoulli distributed weights \cite{Wan}.

In this paper, we present a coding theoretic approach 
for evaluating the accumulate distribution of the minimum-cut capacity of weighted random graphs.
This approach is totally different from those used in the conventional works. 
The basis of the analysis is  the correspondence between the cut space of an undirected graph \cite{CUT} and 
a binary LDGM (low-density generator-matrix) code with column weight 2.
Yano and Wadayama presented that an ensemble analysis for a class of binary LDGM codes 
with column weight 2 for the network reliability problem \cite{Yano}.
This paper extends the idea in \cite{Yano} to weighted graph ensembles.
We focus on a weighted version of Erd\H{o}s-R\'{e}nyi random graph ensemble \cite{RANDOMGRAPH} in this paper.

\section{Preliminaries}

In this section, we first introduce several basic definitions and notation used throughout the paper. 
Then, the cut-set weight distribution will  be discussed. 

\subsection{Notation and definitions}

A graph $G=(V,E)$ is a pair of a vertex set $V=\{v_1, \ldots, v_k\}$ and an edge set 
$E=\{e_1,\ldots, e_n \}$ where $e_j = (u,v), u,v \in V$ is an edge.
If $e_j = (u,v)$ is not an ordered pair, i.e., $(u,v)=(v,u)$, the graph $G$ is called {\em an undirected graph}.
Otherwise, i.e., $(u,v)$ is an ordered pair, $G$ is {\em a directed graph}.
The two vertices connecting an edge $e \in E$ are referred to as the end points of $e$.
If an edge $e=(u,u)$ has the identical end points, $e$ is called {a self-loop}.

If real valued function $w: E \rightarrow \Bbb R_{\ge 0}$ is defined for an undirected graph $G=(V,E)$,
the triple $(V, E, w)$ is considered as a {\em weighted graph}. 
The set $\Bbb R_{\ge 0}$ represents the set of non-negative real numbers.
In our context, the weight function  $w$ represents the link capacity for each edge.

Assume that a weighted undirected graph $G=(V, E, w)$ is given.
A non-overlapping bi-partition $V=X \cup (V \backslash X)$ is called a {\em cut} where 
$X$ is a non-empty proper subset of $V (X \ne V)$.
The set of edges bridging $X$ and $V \backslash X$ is referred to as the {\em cut-set} corresponding to 
the cut of $(X, V \backslash X)$. 
The cut weight of $X$ is defined as 
$
\sum_{u \in X, v \in V \backslash X} w(u,v).
$

\subsection{Random graph ensemble}
\label{randomensemble}
In the following, we will define an ensemble of weighted undirected graphs.
The graph ensemble is based on  Erd\H{o}s-R\'{e}nyi random graph ensemble.
Let $k (k \ge 1)$ be the number of labeled vertices and 
$n (1 \le n \le k(k-1)/2)$ be the number of labeled undirected edges.
The vertices are labeled from $1$ to $k$ and the edges are labeled from $1$ to $n$.

For any adjacent vertices, a single edge is only allowed.
It is assumed that each edge has own integer weight; namely, a weight $w_i \in [1,q] (i \in [1,n])$ is assigned to
the edge with label $i$, which is denoted by the $i$th edge. The notation $[a,b]$ denotes 
the set of consecutive integers from $a$ to $b$.
The set $R_{k,n}^q$ denotes the set of all undirected weighted graphs with 
$k$-vertices and $n$-edges satisfying the above assumption.

For any $G \in R_{k,n}^q$, the sets of vertices and edges are denoted by $V(G)$ and $E(G)$, respectively.
In a similar way, $w_i(G)$ is defined as the weight of $i$th edge of $G$.

It is evident that the cardinality of  $R_{k,n}^q$ is given by 
\begin{equation}
|R_{k,n}^q | = n! {{k \choose 2}\choose  n}q^n.
\end{equation}
We here assign the probability 
\begin{equation}
P(G)= \frac{1}{n! {{k \choose 2}\choose  n}} \mu(w_1(G)) \mu(w_2(G)) \cdots \mu(w_n(G))   
\end{equation}
for $G \in R^q_{k,n}$ where $\mu$ is a discrete probability measure defined over $[1,q]$; namely,
it satisfies 
\begin{equation}
\sum_{w\in [1,q] }\mu(w) = 1 \mbox{ and } \forall w \in [1,q], \mu(w) \ge 0.
\end{equation}

The pair $(R_{k,n}^q, P)$ defines an ensemble of random graphs and it is  denoted by ${\cal E}$.
\subsection{Incidence matrix}

For $G \in R^q_{k,n}$, the incidence matrix of $G$, denoted by $M(G) \in \{0,1\}^{k \times n}$,
is defined as follows:
\begin{equation}
M(G)_{i,j} = 
\left\{
\begin{array}{ll}
1, & \mbox{if $i$-th vertex connects to $j$-th edge}\\
0, & \mbox{otherwise},\\
\end{array}
\right.
\end{equation}
where $M(G)_{i,j}$ is the $(i,j)$-element of $M(G)$.
If $G$ is connected, then the rank  (over $\Bbb F_2$) of $M(G)$ is $k-1$.
The row space  (over $\Bbb F_2$) of $M(G)$ coincides with the set of all possible incidence vectors of 
cut-sets of $G$. 

\subsection{Cut weight distribution}

For a given undirected graph, we can enumerate the number of cut-sets with cut weight $w$.
The {\em cut weight distribution} of $G$ by
\begin{equation}
B_w(G) = \sum_{E \subset E(G)} \Bbb I [E \mbox{ is a cut-set of  }  G,   \mbox{cut weight is } w]
\end{equation}
for positive integer $w$. The function $\Bbb I[\cdot]$ is the indicator function that takes value 1 if the 
condition is true; otherwise it takes value 0.
This cut weight distribution can be regarded as an analog of the weight distribution of the binary linear code defined 
by the incidence matrix of a given undirected graph.

For ensemble analysis, it is convenient to introduce another form of the weight distribution.
The {\em detailed cut weight distribution}
$
A_{u,v.w}: R_{k,n}^q \rightarrow \Bbb Z_{\ge 0} 
$
is defined by
\begin{eqnarray} \nonumber
A_{u,v,w}(G)  \\
&&\hspace{-2.3cm}= \sum_{m \in Z^{(k,u)} } \sum_{c \in Z^{(n,v)} }\Bbb I \left[m M(G) = c,\sum_{i=1}^n c_i w_i(G)=w \right],
\end{eqnarray}
for $u \in [1,k-1], v \in [0,n]$.
The set of constant weight binary vectors 
$Z^{(a,b)}$ is defined as
\begin{equation}
Z^{(a,b)} = \{x \in  \{0,1\}^a: w_H(x)= b  \}.
\end{equation}
The function $w_H(\cdot)$ represents the Hamming weight.  
Assume that the cardinality of the cut is $v$ and that the size of $X \subset E$ is $u$.
Under this condition, the function $A_{u,v,w}(G)$ represents the number of cuts with the cut weight $w$.
It should be noted that the one-to-one correspondence between the cut space and the set of incident vectors of the cuts 
are implicitly used in the definition of $A_{u,v,w}(G)$.

The following lemma indicates the relationship between $A_{u,v,w}$ and $B_w(G)$.
\begin{lemma}\label{bwg}
For $G \in R_{k,n}^q$,  the cut weight distribution $B_w(G)$ can be upper bounded by
\begin{equation}\label{upperbound}
B_w(G) \le \frac 1 2 \sum_{u=1}^{k-1} \sum_{v=0}^n A_{u,v.w}(G),
\end{equation}
for  $w \in \Bbb Z_{\ge 0}$. The notation $\Bbb Z_{\ge 0}$ represents the set of non-negative integers.
\end{lemma}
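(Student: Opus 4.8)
The plan is to translate both sides of (\ref{upperbound}) into statements about vertex subsets, using the incidence matrix as a dictionary. I would begin with the observation that, since no graph in $R_{k,n}^q$ has a self-loop, every column of $M(G)$ has Hamming weight exactly $2$. Hence for any $m \in \{0,1\}^k$, viewed as the indicator vector of a subset $X \subseteq V(G)$, the $j$-th coordinate of $m M(G)$ computed over $\Bbb F_2$ equals $1$ precisely when exactly one endpoint of the $j$-th edge lies in $X$. Equivalently, $c := m M(G)$ is the incidence vector of the cut-set $\delta(X)$ of the bipartition $(X, V(G)\setminus X)$, so that $\sum_{i=1}^{n} c_i w_i(G)$ is the cut weight of $X$ and $w_H(c)$ is the cardinality of $\delta(X)$.

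Using this dictionary, I would rewrite the right-hand side. For fixed $u$, summing $A_{u,v,w}(G)$ over $v \in [0,n]$ simply removes the cardinality constraint on the cut-set, so $\sum_{v=0}^{n} A_{u,v,w}(G)$ is the number of subsets $X \subseteq V(G)$ with $|X| = u$ whose cut weight equals $w$. Summing further over $u \in [1,k-1]$ counts every nonempty proper subset $X$ of $V(G)$ whose cut weight is $w$; call this number $N_w(G)$, so the right-hand side of (\ref{upperbound}) is $N_w(G)/2$.

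It then remains to compare $N_w(G)$ with $B_w(G)$, and I would do this for $w \ge 1$ first, dealing with $w = 0$ as a separate easy check below. By the row-space description of $M(G)$ recalled in the preliminaries, $X \mapsto \delta(X)$ maps the nonempty proper subsets of $V(G)$ onto the cut-sets of $G$, and the preimage of any fixed cut-set is a coset of the left null space of $M(G)$. That null space has dimension equal to the number $c(G) \ge 1$ of connected components of $G$ and contains both the all-zeros and the all-ones vectors; since $w \ge 1$ forces the cut-set to be nonempty, its preimage coset avoids both of these vectors, so all $2^{c(G)}$ of its members are nonempty proper subsets. Therefore every cut-set of cut weight $w$ is counted exactly $2^{c(G)} \ge 2$ times in $N_w(G)$, which gives $N_w(G) = 2^{c(G)} B_w(G) \ge 2 B_w(G)$, i.e., (\ref{upperbound}).

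I expect the multiplicity bookkeeping in the last paragraph to be the only delicate point: one must verify that $X \mapsto \delta(X)$ is onto the cut-sets, that antipodal subsets yield the same cut-set, and that the extra collisions occurring when $G$ is disconnected only strengthen the inequality — after which the factor $1/2$ and the direction of the bound are immediate. The remaining $w = 0$ case is routine: all edge weights lie in $[1,q]$, so a cut weight of $0$ forces an empty cut-set, and one checks directly that $2 B_0(G) \le N_0(G) = 2^{c(G)} - 2$ (both sides being $0$ when $G$ is connected).
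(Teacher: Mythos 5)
Your proposal is correct and follows essentially the same route as the paper: both translate $B_w(G)$ into a count over the row space of $M(G)$ and obtain the factor $\frac{1}{2}$ from the multiplicity of preimages $m$ of each cut-set vector. The only difference is one of bookkeeping --- the paper simply notes that each $c$ in the row space has at least two admissible preimages (via $m$ and $m+1^k$), handling all $w$ uniformly, whereas you compute the exact multiplicity $2^{c(G)}$ and treat $w=0$ separately; this yields a slightly sharper statement but is not a genuinely different argument.
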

\begin{proof}
Let 
\[
S(G) = \{c = m M(G)  \in \Bbb F_2^n  \mid m \in \Bbb F_2^k \backslash \{0^k, 1^k\} \}.
\]
The cut weight distribution $B_w(G)$ can be rewritten as follows.
\begin{eqnarray} \nonumber
B_w(G) 
&=& \sum_{E \subset E(G)} \Bbb I [E \mbox{ is a cut-set of  }  G,   \mbox{cut capacity  is } w] \\ \label{abcd}
&=& \sum_{c \in  S(G) }  \Bbb I \left[\sum_{i=1}^n c_i w_i(G)=w \right]. 
\end{eqnarray}
The second equality is due to the fact that the row space of $M(G)$ equals the set of all possible 
cut-set vectors of $G$.
It is evident that 
\[
|\{m \in \Bbb F_2^k \mid m \ne 0^k, m \ne 1^k, c = m M(G)  \}  |  \ge 2
\] 
holds for any $c \in S(G)$. This implies that 
\begin{equation} \label{rowspace}
\sum_{c \in S(G) } h(c) \le  \frac {1}{2} \sum_{m \in \{0,1\}^k \backslash \{0^k, 1^k\}} \sum_{c \in \{0,1\}^n} h(c)\Bbb I \left[c = m M(G) \right]
\end{equation}
holds for any real-valued function $h: \{0,1\}^n \rightarrow \Bbb R$.
Substituting (\ref{rowspace}) into (\ref{abcd}), we obtain 
\begin{eqnarray} \nonumber
B_w(G)
 &\le& \frac 1 2 \sum_{m \in \{0,1\}^k \backslash \{0^k, 1^k\}} 
\sum_{c \in \{0,1\}^n} \Bbb I \left[c = m M(G) \right]  \\ \nonumber
&\times & \Bbb I \left[  \sum_{i=1}^n c_i w_i(G)=w \right] \\
&=& \frac 1 2 \sum_{u=1}^{k-1} \sum_{v=0}^n A_{u,v.w}(G).
\end{eqnarray}
\end{proof}

\section{Ensemble average of cut weight distribution}

In this section, we discuss the  average of $A_{u,v,w}(G)$ over the ensemble ${\cal E}$.
This analysis is very similar to the derivation of the 
average weight distribution of LDGM codes with column weight 2.

\subsection{Preparation}

In the following, the expectation operator ${\sf E}$ is defined as
\begin{equation}
{\sf E}[f(G)] = \sum_{G \in R_{k,n}^q } P(G) f(G),
\end{equation}
where $f$ is any real-valued function defined on $R_{k,n}^q$.
The next lemma plays a key role to derive a closed form the average cut set weight distribution.
\begin{lemma} \label{closedlemma}
Assume that  $m^* \in \{0,1\}^k $ and 
$c^* \in \{0,1\}^n $ satisfies $w_H(m^*)=u$ and $w_H(c^*)=v$
where $u \in [1,k-1]$ and $v \in [0,n]$. The following equality 
\begin{eqnarray}  \nonumber
&&\hspace{-1cm}{\sf E} \left[ \Bbb I \left[m^* M(G) = c^*, \sum_{i=1}^n c_i^* w_i(G)=w \right]  \right] \\  \nonumber
&=& \frac{1}{{n \choose v} {{k \choose 2}\choose  n}}{u(k-u) \choose v} {{k \choose 2} - u(k-u) \choose n - v}
[x^w] f(x)^v \\ \label{closed}
\end{eqnarray}
holds. The function $f(x)$ is defined by
\begin{equation}
f(x) = \sum_{i=1}^q \mu(i) x^{i}.
\end{equation}
The term $[x^w] f(x)^v$ represents the coefficient of $x^w$ in $f(x)^v$.
\end{lemma}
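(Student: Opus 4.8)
The plan is to use the product structure of the ensemble measure $P$. A random $G\in R^q_{k,n}$ is drawn by first choosing an ordered $n$-tuple of distinct edges from the ${k\choose 2}$ vertex pairs, uniformly, and then drawing the $n$ edge weights independently from $\mu$; these two stages are independent, and the weight constraint $\sum_i c^*_i w_i(G)=w$ only couples to a few coordinates, so the expectation should factor into a ``combinatorial count'' times a ``weight generating function'' term. Fix $X\subseteq V(G)$ to be the support of $m^*$, so $|X|=u$. The $j$th coordinate of $m^*M(G)$ equals $1$ exactly when the $j$th edge of $G$ has precisely one endpoint in $X$, i.e.\ when it crosses the cut $(X,V\setminus X)$. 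Among the ${k\choose 2}$ vertex pairs there are $u(k-u)$ potential crossing edges and ${k\choose 2}-u(k-u)$ potential non-crossing edges.

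First I would count the ordered edge tuples (ignoring weights for the moment) with $m^*M(G)=c^*$. Because $w_H(c^*)=v$, this equality forces the $v$ label positions $\{j:c^*_j=1\}$ to be filled by distinct crossing edges and the remaining $n-v$ positions by distinct non-crossing edges; the number of ways to do so is ${u(k-u)\choose v}\,v!$ for the crossing part and ${{k\choose 2}-u(k-u)\choose n-v}(n-v)!$ for the non-crossing part. Since the weight-independent factor of $P(G)$ equals $1/(n!\,{{k\choose 2}\choose n})$ for every $G$, and $v!(n-v)!/n!=1/{n\choose v}$, this step contributes the prefactor $\frac{1}{{n\choose v}{{k\choose 2}\choose n}}{u(k-u)\choose v}{{k\choose 2}-u(k-u)\choose n-v}$.

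Next I would dispatch the weight constraint. Conditioned on any admissible edge tuple, the weights $w_1(G),\dots,w_n(G)$ are i.i.d.\ with law $\mu$, and the event $\sum_{i=1}^n c^*_i w_i(G)=w$ involves only the $v$ weights at positions with $c^*_j=1$. Summing $\prod_{i=1}^n\mu(w_i(G))$ over all weight assignments subject to this constraint factorizes: the $n-v$ irrelevant weights contribute $1$ because $\mu$ is a probability measure, while the $v$ relevant ones contribute $\sum_{i_1,\dots,i_v\in[1,q]}\Bbb I[i_1+\cdots+i_v=w]\prod_{t=1}^v\mu(i_t)$, which by expanding $f(x)^v=\big(\sum_{i=1}^q\mu(i)x^i\big)^v$ is exactly $[x^w]f(x)^v$. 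Multiplying this by the prefactor from the previous step yields (\ref{closed}).

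The bookkeeping is routine; the one point needing care is the bijective identification in the counting step: $m^*M(G)=c^*$ is an equality of \emph{labeled} vectors, so a crossing edge must occupy a position with $c^*_j=1$ rather than merely appearing somewhere in $G$, and it is this that lets the count split cleanly into an ordered choice of crossing edges and an independent ordered choice of non-crossing edges. Degenerate parameter ranges such as $v>u(k-u)$ or $n-v>{k\choose 2}-u(k-u)$ need no separate treatment, since the relevant binomial coefficient is then zero, in agreement with the fact that no such $G$ exists.
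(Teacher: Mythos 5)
Your proposal is correct and follows essentially the same route as the paper: you count the ordered edge-tuples consistent with $m^*M(G)=c^*$ by splitting the $u(k-u)$ crossing pairs from the non-crossing ones (arriving at the same factor $v!(n-v)!{u(k-u)\choose v}{{k\choose 2}-u(k-u)\choose n-v}$), and you handle the weight constraint via the generating function $[x^w]f(x)^v$, which is just the factored form of the multinomial sum the paper writes out explicitly. The only cosmetic difference is that the paper first normalizes $m^*$ and $c^*$ by symmetry to have leading ones, whereas you work directly with the supports.
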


\begin{proof}
Due to the symmetry of the ensemble,  we can assume that
the first $u$-elements of $m^*$ are one and the rests are zero without loss of generality.
In a similar manner, $c^*$ is assumed to be the binary vector such that
first $v$-elements are one and the rests are zero.

In the following, we will count the number of labeled graphs satisfying $m^* M(G) = c^*$ by
counting the number of binary incidence matrices satisfying the above condition.
Let 
$
M(G)=  \left( f_1 \  f_2 \ \cdots f_n \right)
$
where $f_i$ is the $i$th column vector of $M(G)$.
Since $M(G)$ is an incidence matrix, the column weight of $f_i$ is $w_H(f_i) = 2$ for $i \in [1,n]$.
From the assumptions described  above,  we have
\begin{equation} \label{mfi}
m^* f_i = 
\left\{
\begin{array}{ll}
1, & i \in [1,v] \\
0, & i \in [v+1,n]. \\
\end{array}
\right.
\end{equation}
We then count the number of allowable combinations of $(f_1,f_2,\ldots, f_n)$ satisfying (\ref{mfi}).
Let 
\begin{equation}
A = \{f \in \{0,1\}^k \mid  m^* f = 1, w_H(f)=2\}.
\end{equation}
The cardinality of $A$ is given by $|A |= u(k-u)$ because 
a non-zero component of $f$ needs to have an index within $[1,u]$ and
another non-zero component has an index in the range $[u+1, k]$.
This observation leads to the number of possibilities for $(f_1,f_2,\ldots, f_v)$
which is given by
$
v! {u(k-u) \choose v}.
$
The remaining $n-v$ columns, $(f_{v+1}, \ldots, f_n)$, should be taken from 
the set $\{f \in \{0,1\}^k \mid  w_H(f)=2\} \backslash A$.
Thus, the number of possibilities for such choice is 
$
(n-v)! {{k \choose 2} - u(k-u) \choose n - v}.
$
In summary, the number of allowable combinations of $(f_1,f_2,\ldots, f_n)$ denoted by $S$
is given by
\begin{equation}\label{cardx}
S = v! (n-v)! {u(k-u) \choose v}{{k \choose 2} - u(k-u) \choose n - v}.
\end{equation}

We are now ready to derive the claim of this lemma. 
To simplify the notation, the cut weight is denoted by
$
\psi(c) = \sum_{i=1}^n c_i w_i(G).
$
The left hand side of (\ref{closed}) can be rewritten as follows:
\begin{eqnarray} \nonumber
&& \hspace{-1cm} {\sf E}[ \Bbb I[m^* M(G) = c^*, \psi(c^*)=w]   \\ \nonumber
&=& \sum_{G \in R_{k,n}^q}  P(G) I[m^* M(G) = c^*, \psi(c^*)=w] \\ \nonumber
&=&\frac{1}{n! {{k \choose 2}\choose  n}} \sum_{G \in R_{k,n}^q} \left(\prod_{i\in [1,n]}   \mu(w_i(G)) \right)   \\ \nonumber
&\times &  \Bbb I[m^* M(G) = c^*, \psi(c^*)=w] \\ \nonumber 
&=&\frac{S}{n! {{k \choose 2}\choose  n}}    \sum_{\stackrel{u_1+\cdots +u_q = v,}{u_1+2 u_2 + \cdots q u_q = w}}
\left(\prod_{i\in [1,q]}   \mu(i)^{u_i} \right)   \frac{v!}{u_1! u_2 !  \cdots u_q !}\\ \nonumber
&=& \frac{v! (n-v)! }{n! {{k \choose 2}\choose  n}}{u(k-u) \choose v} {{k \choose 2} - u(k-u) \choose n - v}
[x^w] f(x)^v. \\
\end{eqnarray}
The last equality is due to  (\ref{cardx}). 
\end{proof}

The following lemma provides the ensemble average ${\sf E}[A_{u,v,w}(G)]$, 
which is a natural consequence of Lemma \ref{closedlemma}.
\begin{lemma} \label{auvw}
The expectation of $A_{u,v,w}(G)$ is given by
\begin{eqnarray} \nonumber
{\sf E}[A_{u,v,w}(G)] \hspace{6cm} \\
= \frac{1}{{{k \choose 2}\choose  n}}
{k \choose u} {u(k-u) \choose v} {{k \choose 2} - u(k-u) \choose n - v} [x^w] f(x)^v,
\end{eqnarray}
where $u \in [1,k-1], v \in [0,n], w \in \Bbb Z_{>0}$. 
\end{lemma}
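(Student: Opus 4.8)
The plan is to derive Lemma~\ref{auvw} directly from Lemma~\ref{closedlemma} by summing over all binary vectors $m^*$ of weight $u$ and $c^*$ of weight $v$, exploiting the permutation symmetry of the ensemble. First I would write out the definition
\[
{\sf E}[A_{u,v,w}(G)] = \sum_{m \in Z^{(k,u)}} \sum_{c \in Z^{(n,v)}} {\sf E}\!\left[\Bbb I\!\left[m M(G) = c, \sum_{i=1}^n c_i w_i(G) = w\right]\right],
\]
which is legitimate because expectation is linear and the sum is finite. The key observation is that the right-hand side of the equality in Lemma~\ref{closedlemma} does not depend on the particular choice of $m^*$ and $c^*$ --- it depends only on their Hamming weights $u$ and $v$. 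Hence each of the $|Z^{(k,u)}| \cdot |Z^{(n,v)}| = {k \choose u}{n \choose v}$ terms contributes the same value, and the double sum collapses to a single product.

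Next I would simply multiply: the number of terms ${k \choose u}{n \choose v}$ times the common value
\[
\frac{1}{{n \choose v}{{k \choose 2} \choose n}} {u(k-u) \choose v} {{k \choose 2} - u(k-u) \choose n-v} [x^w] f(x)^v
\]
from Lemma~\ref{closedlemma}. The factor ${n \choose v}$ in the numerator and in the denominator cancel, leaving exactly
\[
\frac{1}{{{k \choose 2} \choose n}} {k \choose u} {u(k-u) \choose v} {{k \choose 2} - u(k-u) \choose n-v} [x^w] f(x)^v,
\]
which is the claimed expression. This finishes the proof.

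Since this is essentially a one-line consequence of the previous lemma, there is no serious obstacle; the only point requiring a small amount of care is the justification that the summand in Lemma~\ref{closedlemma} is genuinely constant over $Z^{(k,u)} \times Z^{(n,v)}$. This follows from the ``symmetry of the ensemble'' already invoked in the proof of Lemma~\ref{closedlemma}: relabeling vertices and edges is a measure-preserving bijection on $R^q_{k,n}$, so any fixed $(m^*,c^*)$ can be mapped to any other pair of the same weights without changing the probability. I would state this explicitly in one sentence and then perform the cancellation. One might also remark that the restriction $w \in \Bbb Z_{>0}$ (as opposed to $\Bbb Z_{\ge 0}$) is harmless here, since $[x^w] f(x)^v$ is well defined for all $w$ and the formula continues to hold; it is stated this way only for consistency with the later use in bounding the minimum cut capacity.
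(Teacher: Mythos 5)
Your proposal is correct and follows essentially the same route as the paper's own proof: linearity of expectation, the symmetry argument collapsing the double sum to ${k \choose u}{n \choose v}$ identical terms, and substitution of Lemma~\ref{closedlemma} with cancellation of the ${n \choose v}$ factor. Your extra sentence justifying the symmetry via measure-preserving relabelings is a welcome (if minor) sharpening of what the paper leaves implicit.
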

\begin{proof}
The expectation of $A_{u,v,w}(G)$ can be simplified as follows:
\begin{eqnarray} \nonumber
&& \hspace{-1.3cm} {\sf E}[A_{u,v,w}(G)]  \\ \nonumber
&=& {\sf E}  \left[ \sum_{m \in Z^{(k,u)} } \sum_{c \in Z^{(n,v)} }\Bbb I[m M(G) = c,  \psi(c)=w ]   \right] \\  \nonumber
&=&  \sum_{m \in Z^{(k,u)} } \sum_{c \in Z^{(n,v)} } {\sf E}  \left[\Bbb I[m M(G) = c, \psi(c)=w] \right]\\ \label{aaa}
&=&
{k \choose u} {n \choose v} {\sf E}  \left[ \Bbb I[m^* M(G) = c^*, \psi(c^*)=w] \right].
\end{eqnarray}
The last equality is due to the symmetry of the ensemble. The binary vectors $m^* \in \{0,1\}^k $ and 
$c^* \in \{0,1\}^n $ are arbitrary vectors satisfying $w_H(m^*)=u$ and $w_H(c^*)=v$.
Substituting (\ref{closed}) in the previous Lemma  into (\ref{aaa}), we obtain the claim of this lemma.
\end{proof}

\subsection{Upper bound on average cut weight distribution}

In order to investigate statistical properties of the minimum cut weight, 
it is natural to study the tail of the average cut weight distribution. 
The following theorem provides an upper bound on average cut weight distribution 
that is  the  basis of  our analysis.
\begin{theorem}
The expectation of $B_w(G)$ over ${\cal E}$ can be upper bounded by
\begin{eqnarray} \nonumber
{\sf E}[B_w(G)] 
&\le&  \frac{1}{ 2{{k \choose 2}\choose  n}} \sum_{u=1}^{k-1} {k \choose u} \sum_{v=0}^{n}
 {u(k-u) \choose v}   \\
&\times& {{k \choose 2} - u(k-u) \choose n - v} [x^w] f(x)^v 
\end{eqnarray}
for  $w \in \Bbb Z_{\ge 0}$.
\end{theorem}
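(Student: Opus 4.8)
The plan is to combine the pointwise inequality of Lemma~\ref{bwg} with the closed-form ensemble average of Lemma~\ref{auvw}, using only the monotonicity and linearity of the expectation operator ${\sf E}$. First I would recall that Lemma~\ref{bwg} gives $B_w(G) \le \frac{1}{2}\sum_{u=1}^{k-1}\sum_{v=0}^{n} A_{u,v,w}(G)$ for \emph{every} $G \in R_{k,n}^q$ and every $w \in \Bbb Z_{\ge 0}$; that is, it is a graph-by-graph statement, not merely an averaged one. Since $G \mapsto B_w(G)$ and $G \mapsto A_{u,v,w}(G)$ are nonnegative functions on the finite probability space $(R_{k,n}^q, P)$, applying ${\sf E}$ to both sides preserves the inequality, and linearity of ${\sf E}$ lets me move it inside the finite double sum, yielding ${\sf E}[B_w(G)] \le \frac{1}{2}\sum_{u=1}^{k-1}\sum_{v=0}^{n} {\sf E}[A_{u,v,w}(G)]$.

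Next I would substitute the value of ${\sf E}[A_{u,v,w}(G)]$ supplied by Lemma~\ref{auvw}, namely $\frac{1}{{{k \choose 2}\choose n}}{k \choose u}{u(k-u) \choose v}{{k \choose 2}-u(k-u) \choose n-v}[x^w]f(x)^v$, then factor the constant $\frac{1}{2{{k \choose 2}\choose n}}$ out of both summations and pull ${k \choose u}$ out of the inner sum over $v$. What remains inside the double sum is exactly ${u(k-u) \choose v}{{k \choose 2}-u(k-u) \choose n-v}[x^w]f(x)^v$, which matches the right-hand side of the theorem verbatim; this completes the argument.

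I expect no genuine obstacle here: the theorem is an immediate corollary of the two preceding lemmas. The only points worth stating carefully are (i) that the expectation must be taken of the \emph{pointwise} bound of Lemma~\ref{bwg} before Lemma~\ref{auvw} is invoked, since the former concerns individual graphs while the latter concerns their ensemble average, and (ii) a brief remark that the boundary value $w=0$ is consistent with the conventions already in force (in particular $[x^0]f(x)^0 = 1$), so that no separate treatment of small $w$ is needed even though Lemma~\ref{auvw} was stated for $w \in \Bbb Z_{>0}$.
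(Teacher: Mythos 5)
Your proposal is correct and follows exactly the paper's own argument: take expectations of the pointwise bound in Lemma~1, use linearity to move the expectation inside the finite double sum, and substitute the closed form from Lemma~3. The extra remark about the boundary case $w=0$ is a reasonable (if minor) point of care that the paper glosses over, but it does not change the route.
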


\begin{proof}
Due to Lemma \ref{bwg}, we immediately have 
\begin{eqnarray}
{\sf E}[B_w(G)]  
&\le& {\sf E} \left[ \frac 1 2 \sum_{u=1}^{k-1} \sum_{v=0}^n A_{u,v.w}(G) \right] \\
&=&  \frac 1 2 \sum_{u=1}^{k-1} \sum_{v=0}^n {\sf E} \left[A_{u,v.w}(G) \right].
\end{eqnarray}
Substituting the left hand side of the equality in Lemma \ref{auvw} into the last equation gives the claim of the theorem.
\end{proof}

\subsection{Accumulate cut distribution}
Let us define the accumulate cut weight of $G$, $C_{\delta}(G)$, by 
\begin{equation}
C_{\delta}(G) = \sum_{w=0}^{\delta-1} B_w (G),
\end{equation}
where $\delta$ is a non-negative integer.
If $C_{\delta}(G)$ is zero, the graph $G$ does not contain a cut with weight smaller than $\delta$.
This implies that $\lambda(G) \ge \delta$ in such a case, thus we have 
\begin{eqnarray} \nonumber
Pr[\lambda(G) \ge \delta]  &=& Pr[C_\delta(G) = 0] \\
&=& 1- Pr[C_\delta(G) \ge 1]. 
\end{eqnarray}
The second equality is due to the non-negativity of $C_\delta(G)$.
The probability $Pr[\lambda(G) \ge \delta]$ can be considered as the 
accumulate probability distribution for the minimum cut capacity: 
\begin{equation}
Pr[\lambda(G) \ge \delta] = \sum_{G \in R_{k,n}^q } P(G) \Bbb I[\lambda(G) \ge \delta ].
\end{equation}

The following theorem is the main contribution of this work. 
\begin{theorem}\label{maintheorem}
Assume that an ensemble ${\cal E}$ is given. The probability $Pr[\lambda(G) \ge \delta]$
can be lower bounded by 
\begin{eqnarray} \nonumber 
Pr[\lambda(G) \ge \delta]  &\ge & 1-   \frac{1}{ 2{{k \choose 2}\choose  n}}
\sum_{w=0}^{\delta-1} \sum_{u=1}^{k-1} {k \choose u} \sum_{v=0}^{n}
 {u(k-u) \choose v}   \\
&\times& {{k \choose 2} - u(k-u) \choose n - v} [x^w] f(x)^v 
\end{eqnarray}
for $\delta \in \Bbb Z_{\ge 0}$.
\end{theorem}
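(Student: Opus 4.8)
The plan is to bound $Pr[\lambda(G)\ge\delta]$ from below by a standard first-moment (union-bound) argument, reusing all of the ensemble-averaging machinery already developed. The starting point is the identity recorded just above the statement, $Pr[\lambda(G)\ge\delta]=1-Pr[C_\delta(G)\ge 1]$, which is valid because $C_\delta(G)=0$ exactly expresses the absence of a cut of weight below $\delta$ (here Lemma~\ref{bwg} guarantees that each $B_w$ genuinely counts cut-sets) and because $C_\delta(G)$ is non-negative. Hence it suffices to upper bound $Pr[C_\delta(G)\ge 1]$.

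Since $C_\delta(G)$ is a non-negative integer-valued random variable on the ensemble ${\cal E}$, Markov's inequality gives $Pr[C_\delta(G)\ge 1]\le {\sf E}[C_\delta(G)]$. Next I would expand $C_\delta(G)=\sum_{w=0}^{\delta-1}B_w(G)$ and use linearity of expectation to write ${\sf E}[C_\delta(G)]=\sum_{w=0}^{\delta-1}{\sf E}[B_w(G)]$. Applying the upper bound on ${\sf E}[B_w(G)]$ proved in the preceding theorem to each of these $\delta$ terms, and summing the resulting triple sum over $w$ from $0$ to $\delta-1$ (all rearrangements of these finite sums being legitimate), gives an upper bound on ${\sf E}[C_\delta(G)]$ equal to the four-fold sum appearing in the statement. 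Substituting this into $Pr[\lambda(G)\ge\delta]=1-Pr[C_\delta(G)\ge 1]\ge 1-{\sf E}[C_\delta(G)]$ yields precisely the claimed inequality.

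There is no genuinely difficult step here: the theorem is essentially a repackaging of the earlier moment computation through the first-moment method, so the effort is all bookkeeping. The only points that need care are (i) verifying once more that $\{C_\delta(G)\ge 1\}$ is exactly the complement of $\{\lambda(G)\ge\delta\}$, which is immediate from the definition of the minimum cut capacity $\lambda$, and (ii) keeping track of the two sources of slack, namely the factor $1/2$ inherited from Lemma~\ref{bwg} and the Markov (union-bound) step; together they make the bound a valid lower bound that need not be tight. I would close by remarking that the union bound is expected to be loose only mildly when $\delta$ is small, which is consistent with the computer experiments reported later in the paper.
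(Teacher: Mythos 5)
Your proposal is correct and follows essentially the same route as the paper: the identity $Pr[\lambda(G) \ge \delta] = 1 - Pr[C_\delta(G) \ge 1]$, Markov's inequality applied to the non-negative variable $C_\delta(G)$, and linearity of expectation combined with the preceding theorem's bound on ${\sf E}[B_w(G)]$. Your write-up is in fact more explicit than the paper's, which leaves the final substitution of the upper bound on ${\sf E}[C_\delta(G)]$ implicit.
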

\begin{proof}
Markov inequality 
\begin{equation}
Pr[C_{\delta}(G) \ge 1] \le {\sf E}[C_\delta(G)]
\end{equation}
provides an lower bound on $Pr[\lambda(G) \ge \delta]$: 
\begin{eqnarray} \nonumber 
Pr[\lambda(G) \ge \delta]  &=& 1- Pr[C_\delta(G) \ge 1] \\
&\ge & 1-  {\sf E}[C_\delta(G)]. 
\end{eqnarray}
\end{proof}
%The derivation of the bound above is based on the first moment method. 
%The tightness of the bound should be examined 

\section{Numerical result}

In order to evaluate the tightness of the lower bound shown in Theorem \ref{maintheorem}, 
we made the following computer experiments. In an experiment, we generated $10^4$-instances of undirected graphs
from the random graph ensemble defined in the Sec \ref{randomensemble}.  We assumed that
$q=5$ and $\mu(1)=0.1, \mu(2)=0.2,  \mu(3)=0.4, \mu(4)=0.2,  \mu(5)=0.1$; namely, 
\begin{equation}
f(x) = 0.1 x^1 +  0.2 x^2 +  0.4 x^3 +  0.2 x^4 +  0.1 x^5.
\end{equation}
The minimum cut capacity for 
each instance was computed by using the Ford-Fulkerson algorithm \cite{Schrijver}.

Figure \ref{fig02} presents the accumulate distribution of minimum cut capacity when
the number of vertices  and edges are $k=100$ and $n=400$, respectively.
The lower curve represents the lower bound presented in Theorem \ref{maintheorem} and
the upper curve is approximate values $Pr[\lambda(G) \ge \delta]$ obtained from $10^4$-randomly 
generated instances.
We can observe that two curves shows reasonable agreement in the range $1 \le \delta \le 4$. 

Figure \ref{fig03} deals with a denser graph ensemble compared with that used in Fig. \ref{fig02}.
In this case, two curves are very close in the range $1 \le \delta \le 15$. 
Compared with Fig. \ref{fig02}, we can see that the proposed lower bound becomes tighter 
for a denser graph ensemble.

From these these experimental results, it can be said that 
the proposed lower bound captures the accumulate distribution $Pr[\lambda(G) \ge \delta]$ of the min-cut capacity of the random graph ensemble fairly well.

\begin{figure}[htbp]
\begin{center}
\includegraphics[width=1.0 \linewidth]{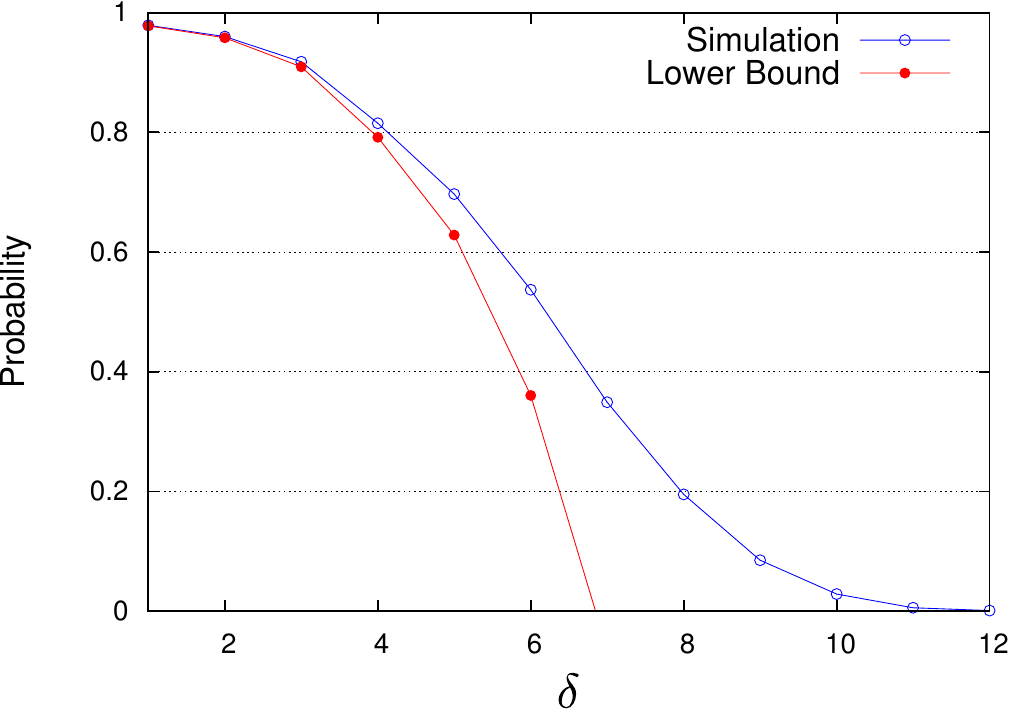} 
\end{center}
	\caption{Sparse case: accumulate distribution  of the minimum cut capacity $Pr[\lambda(G) \ge \delta]$ $(k=100, n=400)$: estimation by simulations and lower bounds}
	\label{fig02}
\end{figure}

\begin{figure}[htbp]
\begin{center}
\includegraphics[width=1.0 \linewidth]{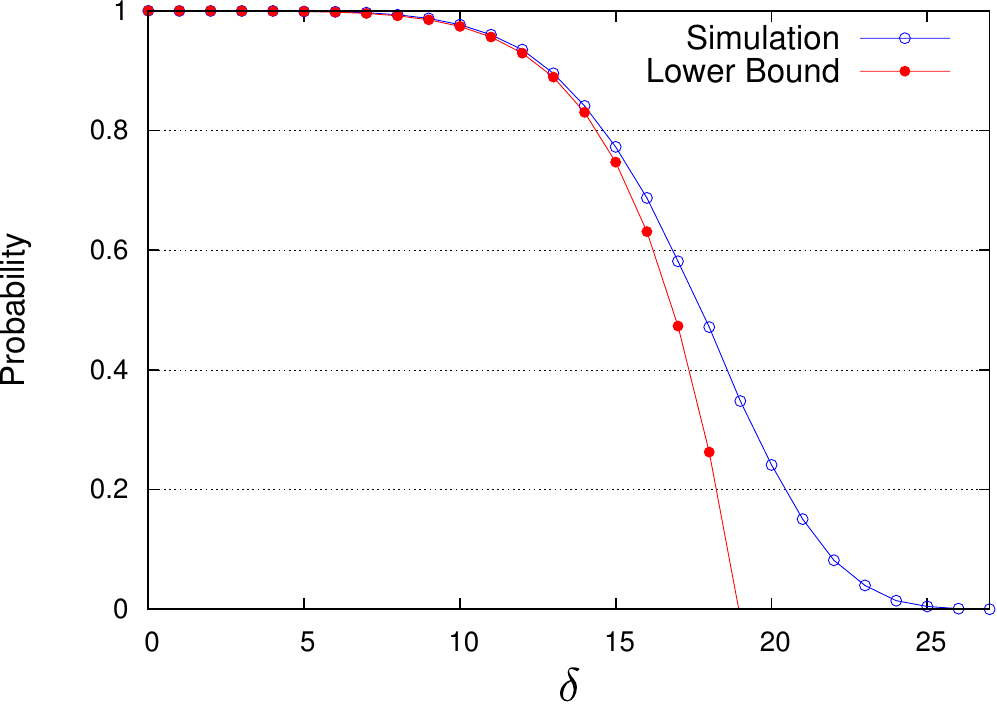} 
\end{center}
	\caption{Dense case: accumulate distribution of the minimum cut capacity $Pr[\lambda(G) \ge \delta]$ $(k=100, n=700)$ estimation by simulations and lower bounds}
	\label{fig03}
\end{figure}

\section{Conclusion}

In this paper, 
a lower bound on the accumulate distribution of the minimum cut capacity for a random graph ensemble is presented.
From the compute experiments, it is observed that the lower bound reflects actual statistical behavior of the minimum cut capacity.
The bound and the proof technique presented in the paper would
deepen our understanding on typical behaviors of the minimum-cut capacity.

The proof technique used here has close relationship to the analysis for the average weight distributions of
LDGM codes with column weight 2. The one-to-one correspondence between a cut space and the row space of 
an incidence matrix implies that the minimum cut capacity is an analog of the minimum distance of the binary linear code 
defined by an incidence matrix.  The analysis presented here  has similarity to the typical minimum distance analysis of
LDPC code ensembles \cite{Gallager}. 

An advantage of the proposed technique is its applicability for a graph ensemble with finite number of vertices and edges.
Most related studies deal with asymptotic behaviors and cannot directly be applied to a finite size graph ensemble.
Of course, it would be interesting to investigate the asymptotic behavior of the proposed lower bound when $n$ and $k$ approach 
infinity while maintaining the relationship $n = f(k)$ ($f$ is a real-valued function, e.g., $n = \beta k^2$). 

The second advantage of the proposed technique is extensibility.
In this paper, we discussed a simple graph ensemble, which is closely related to 
the Erd\H{o}s-R\'{e}nyi random graph ensemble \cite{RANDOMGRAPH}. 
The analysis for deriving the average cut-set weight distribution is approximately equivalent 
to the analysis of the average weight distribution of an LDGM code ensemble \cite{Hu} 
or of the average coset weight distribution of an LDPC code ensemble \cite{wadayama2}. 
Extension to other graph ensembles, such as regular or irregular bipartite graph ensembles, 
 may be straightforward.

\subsection*{Acknowledgement}
This work was partly supported by the Ministry of Education, Science, Sports and Culture, Japan, Grant-in-Aid, No. 22560370.

\end{document}